\newtheorem{proposition}{Proposition}
\begin{document}
\title{Unextendible entangled bases and more nonlocality with less entanglement}

\author{Saronath Halder}
\affiliation{Harish-Chandra Research Institute, HBNI, Chhatnag Road, Jhunsi, Allahabad 211 019, India}

\author{Ujjwal Sen}
\affiliation{Harish-Chandra Research Institute, HBNI, Chhatnag Road, Jhunsi, Allahabad 211 019, India}

\begin{abstract}
We consider a general version of the phenomenon of more nonlocality with less entanglement, within the framework of the unambiguous (i.e., conclusive) quantum state discrimination problem under local quantum operations and classical communication. We show that although the phenomenon was obtained before for two qutrits, it can also be observed for two qubits, while still being at the single-copy level. We establish that the phenomenon is intrinsically connected to the concept of unextendible entangled bases, in the two-qubit case. In the process, we demonstrate a hierarchy of nonlocality among sets of two-qubit orthogonal pure states, where the ``nonlocality'' is in the sense of a difference between global and local abilities of quantum state discrimination. We present a complete characterization of two-qubit pure orthogonal state sets of cardinality three with respect to their nonlocality in terms of unambiguous local distinguishability, the status for other cardinalities being already known. The results are potentially useful for secure quantum communication technologies with an optimal amount of resources.
\end{abstract}
\maketitle

\section{Introduction}\label{sec1}
A focus of current quantum technologies is to control quantum systems for implementation of  information processing protocols \cite{Nielsen00}, the  reason being that a quantum device  may provide advantage over its classical counterpart \cite{Shor95, Buhrman01, Guerin16, Harrow18, Zhong20}. An important step in information processing protocols is to decode the information encoded within the state of a quantum system. For this purpose, it may be necessary to distinguish among the possible states of the given system. Secure distribution of information among several spatially separated parties potentially requires employing ensembles of quantum states, information encoded in which is not easy for dishonest parties or outsiders to decode.

When a composite quantum system is distributed among several spatially separated parties, it may not be possible to distinguish among the possible states of the system perfectly if only local quantum operations and classical communication (LOCC) are allowed, even though the states are pairwise orthogonal. Formally, the problem of state discrimination under LOCC (also known as the local state discrimination problem) is to optimally identify, by LOCC, the state that is secretly chosen from a given set.

The fact that separable states can be created by LOCC \cite{Werner89} while entangled ones \cite{Horodecki09-1} cannot, may coax one to presume that the local indistinguishability of an ensemble of orthogonal states of a multiparty system is due to entanglement present in the ensemble states. While this is indeed true to some extent \cite{Chen03, Chen03-1, Badziag03, Horodecki04, Ghosh04, Ghosh05, Hayashi06, Sen(De)06, Matthews09}, an actual deciphering of the phenomenon is still elusive and has roots independent of entanglement \cite{Horodecki07}. A seminal result in this direction is the discovery of a set of orthogonal \emph{product} states which cannot be perfectly distinguished by LOCC \cite{Bennett99-1}. Another landmark work demonstrates that two orthogonal pure states can always be distinguished by LOCC, irrespective of their entanglement content~\cite{Walgate00}. Thereafter, several related results have been reported in the literature~\cite{Virmani01, Ghosh01, Groisman01, Walgate02, Ghosh02, Horodecki03, Badziag03, Ghosh04, Rinaldis04, Fan04, Nathanson05, Watrous05, Niset06, Hayashi06, Duan07, Feng09, Bandyopadhyay11, Yu12, Yang13, Childs13, Zhang14, Zhang15, Xu16, Croke17, Halder18, Halder19}. The importance of the study of local indistinguishability of quantum states stems from the key requirement in many quantum information protocols, such as data hiding \cite{Terhal01, Eggeling02} and secret sharing \cite{Markham08}, to distinguish a quantum ensemble locally. Furthermore, the setting of the local state discrimination problem can be useful to demonstrate ``nonlocal'' properties of composite quantum systems, where the term, ``nonlocality'', is used to signify a difference between global and local distinguishabilities of an ensemble of quantum states.

A prominent set of examples of LOCC indistinguishable sets of orthogonal quantum states is provided by the unextendible product bases~\cite{Bennett99, Divincenzo03, Rinaldis04}. Subsequently, unextendibility for entangled states has also been studied~\cite{Bravyi11, Chen13, Li14, Wang14, Nan15, Nizamidin15, Guo16, Zhang16-3, Wang17-1, Zhang18-1, Zhang18-2, Liu18, Song18, Zhao20, Guo14, Han18, Shi19, Yong19, Wang19, Chakrabarty12,Chen13-1, Guo15-1, Zhang17-4, Halder21}, although the LOCC (in)distinguishability properties of such bases are less explored~\cite{Halder21}. It is potentially useful to provide a brief survey of the results related to unextendibility of sets of entangled states. Unextendible entangled bases using orthogonal maximally entangled states were introduced in Ref.~\cite{Bravyi11} for $\mathbb{C}^d\otimes\mathbb{C}^d$, where $d=3,4$. Subsequently, unextendible maximally entangled bases were presented in \cite{Chen13} for $\mathbb{C}^{d_1}\otimes\mathbb{C}^{d_2}$,  where $d_1d_2>4$, $d_2/2<d_1<d_2$. Other articles discussing bipartite unextendible maximally entangled bases include~\cite{Li14, Wang14, Nan15, Nizamidin15, Guo16, Zhang16-3, Wang17-1, Zhang18-1, Zhang18-2, Liu18, Song18, Zhao20}. Unextendible entangled bases for fixed Schmidt ranks were presented in Refs.~\cite{Guo14, Han18, Shi19, Yong19, Wang19}. A type of unextendibility for nonmaximally entangled states was presented in Refs.~\cite{Chakrabarty12,Chen13-1}. Unextendibility for multipartite entangled states was discussed in Refs.~\cite{Guo15-1, Zhang17-4}. Further, in Ref.~\cite{Halder21}, local distinguishability and indistinguishability properties of both bipartite and multipartite unextendible entangled bases were discussed. 

If a given set of orthogonal quantum states cannot be perfectly distinguished by LOCC, then it is usual to claim that the ensemble possesses ``nonlocality''. If perfect discrimination is not possible, one then looks for unambiguous discrimination under LOCC~\cite{Chefles98, Chefles04, Ji05, Duan07, Walgate08, Bandyopadhyay09, Cohen14}. This setting is interesting because with some nonzero probability, it is possible to distinguish the states without committing any error. It has also been referred to in the literature as the conclusive quantum state discrimination problem. As mentioned earlier, local indistinguishability, and in particular the unambiguous case, is a key ingredient for many information processing protocols, and therefore important for building secure quantum technologies. 

A phenomenon that further dissociates local indistinguishability of multiparty orthogonal quantum states with entanglement-like concepts is what has been termed ``more nonlocality with less entanglement''~\cite{Horodecki03}. It demonstrates the possibility to get a locally indistinguishable set of orthogonal states by replacing a more entangled state with a less entangled one in a locally distinguishable set. The phenomenon was obtained in a system of two qutrits\footnote{A qutrit is a  three-dimensional quantum system.}, within the scenario of deterministic (i.e., with unit probability) distinguishability under LOCC-based measurements. The example of course helps in underlining that there is more to local indistinguishability of orthogonal states than the entanglement content of the ensemble states. However, it is an isolated example, and to understand the phenomenon and obtain  qualitative and quantitative comprehension of it, we need to analyze further instances and connect the occurrences of the phenomenon with other phenomena in quantum information and possibly beyond. This is even more important given that there are indications that local indistinguishability is also related with entanglement to some extent \cite{Chen03, Chen03-1, Badziag03, Horodecki04, Ghosh04, Ghosh05, Hayashi06, Sen(De)06, Matthews09}. The phenomenon more nonlocality with less entanglement is particularly important because of its potential ability to provide local indistinguishability with an ensemble with reduced average entanglement than other candidates. 
 
In this work, we establish connections between two \emph{a priori} far-flung concepts of quantum information, viz. unextendible entangled bases and the phenomenon of more nonlocality with less entanglement. Unlike in the original work, we are able to demonstrate the phenomenon already for two qubits, the minimal-dimensional multiparty system. We present here a general version of the above phenomenon using the setting of unambiguous local state discrimination (unlike the deterministic scenario considered in Ref.~\cite{Horodecki03}). Therefore, the present version of the phenomenon provides  potentially more ensembles that can be employed for secure information processing, while the cost of preparing such ensembles might be less than what was required before. We further show that unambiguous local state discrimination is useful to examine the degree of nonlocality of ensembles that cannot be perfectly distinguished by LOCC. For the case of two qubits, we obtain a complete characterization of the possible combinations which may occur within the setting of unambiguous state discrimination under LOCC. 

\section{Preliminaries}\label{sec2}
{\it Unambiguous discrimination.---}~Given a set of states $\{\ket{\psi_i}\}_i$, if we pick a particular state $\ket{\psi_i}$, then the state can be unambiguously identified if and only if it is possible to recognize the state with some nonzero probability without committing any error \cite{Chefles00, Barnett09, Bergou10, Bae15}. Moreover, if every state of the given set can be unambiguously identified then we say that the set is unambiguously distinguishable. In this work, we consider only the two-qubit quantum system, which is distributed among two spatially separated parties and they are allowed to perform LOCC only. So, if a given set of this physical system is unambiguously distinguishable under LOCC, then we say that the set is unambiguously locally distinguishable or the set is conclusively locally distinguishable. If a set cannot be distinguished unambiguously by LOCC, then it means that the set contains at least one state which cannot be unambiguously identified by LOCC. For unambiguous discrimination, it is necessary to consider linearly independent quantum states~\cite{Chefles98}.

{\it More nonlocality with less entanglement.---}~A set of pure orthogonal entangled states was reported in Ref.~\cite{Horodecki03} that is perfectly locally distinguishable, but in which if we replace one entangled state  by a product state, then the new set cannot be perfectly distinguished by LOCC. This appearance of local indistinguishability with lowering of average entanglement of the ensemble states was termed as more nonlocality with less entanglement. In the present work, we consider several sets that are nonlocal in the sense that they cannot be perfectly distinguished by LOCC. So, the sets are equally nonlocal from the perspective of perfect discrimination by LOCC. Still, it is possible to put a hierarchy among those sets using the setting of unambiguous discrimination under LOCC. Moreover, the degree of nonlocality (in the sense of the strength of local indistinguishability) may increase if we decrease the average entanglement content of the set or more strikingly if we replace an entangled state by a product state. It is in this sense that we claim to have considered a general version of the phenomenon of more nonlocality with less entanglement. We mention here that for any state discrimination problem considered in this paper, the given states are equally probable and only a single copy of each state of the given set is available. In this context, note that the phenomenon in the multi-copy limit was considered in Ref.~\cite{Banik21}. See also Ref.~\cite{Sen(De)12} in this regard.

{\it Chefles's criterion.---}~For unambiguous identification under LOCC, a necessary and sufficient condition has been derived in Ref.~\cite{Chefles04}. A simplified version of the condition can be found in Ref.~\cite{Bandyopadhyay09}. According to the condition, when a set of states $\{\ket{\psi_i}\}_i$ is given, to identify a particular state $\ket{\psi_i}$ among them unambiguously by LOCC, it is necessary and also sufficient that there exists at least one product state $\ket{\alpha}$ such that $\langle\psi_i|\alpha\rangle>0$ and for all $j \neq i$, $\langle\psi_j|\alpha\rangle=0$.
 
{\it Unextendible entangled basis.---}~While the concept is generic to all multiparty quantum systems, we will need it only for bipartite systems. Consider a set of orthogonal pure entangled states of a bipartite tensor-product Hilbert space. The states can be maximally or nonmaximally entangled states. If the states span a proper subspace of the given Hilbert space such that the complementary subspace contains no entangled state, then the considered set is said to form an unextendible entangled basis (UEB).

\section{Results}\label{sec3}
Two orthogonal pure states can always be perfectly distinguished by LOCC \cite{Walgate00}. A complete orthonormal basis cannot be unambiguously distinguished locally if and only if the set contains at least one entangled state~\cite{Horodecki03,Chefles04}.\footnote{As mentioned in Ref.~\cite{Horodecki03}, a complete orthonormal basis can never contain a single entangled state, as otherwise the remaining states will obviously form an unextendible product basis, which is not allowed, because that will lead to the pure entangled state to be bound entangled \cite{Horodecki97, Horodecki98}.} For two-qubit systems, this was already shown in Ref.~\cite{Walgate02}. If a complete basis contains $n$ entangled states, then these entangled states cannot be unambiguously identified by LOCC, and based on this fact, it is possible to place a simple hierarchy among the complete bases with different values of $n$.

On the other hand, if a set contains three orthogonal two-qubit pure states, then the situation is far richer, and there are several interesting cases with respect to their local distinguishability. In this regard, we consider only those sets which contain at least two entangled states, as otherwise the sets are perfectly distinguishable by LOCC~\cite{Walgate02}.\footnote{For unambiguous discrimination of three nonorthogonal states, see Ref.~\cite{Bandyopadhyay09}.} We begin by presenting the following proposition for any set of three two-qubit orthogonal {\it maximally} entangled states.

\begin{proposition}\label{prop1}
Any set of three two-qubit pairwise orthogonal maximally entangled states is unambiguously distinguishable by LOCC.
\end{proposition}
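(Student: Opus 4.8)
The plan is to verify Chefles's criterion separately for each of the three states. Denote the states by $|\psi_1\rangle,|\psi_2\rangle,|\psi_3\rangle$; being pairwise orthogonal they are linearly independent and span a three-dimensional subspace of $\mathbb{C}^2\otimes\mathbb{C}^2$, so their orthogonal complement is spanned by a single unit vector $|\chi\rangle$. Fix $i$. The subspace $W_i$ of vectors orthogonal to $|\psi_j\rangle$ for all $j\neq i$ is two-dimensional and equals $\mathrm{span}\{|\psi_i\rangle,|\chi\rangle\}$. By Chefles's criterion it is enough to produce, for each $i$, a product vector $|\alpha_i\rangle\in W_i$ with $\langle\psi_i|\alpha_i\rangle\neq 0$; a global phase on $|\alpha_i\rangle$ then makes the overlap positive.

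First I would use the following dimension-counting fact: every two-dimensional subspace $W$ of $\mathbb{C}^2\otimes\mathbb{C}^2$ contains a product vector. Writing a basis of $W$ through its $2\times 2$ coefficient matrices $U,V$, a general element of $W$ has coefficient matrix $aU+bV$, and $\det(aU+bV)$ is a homogeneous quadratic in $(a,b)$; over $\mathbb{C}$ it either vanishes identically or has a nonzero root, and in both cases there is $(a_0,b_0)\neq(0,0)$ with $\mathrm{rank}(a_0U+b_0V)\le 1$, i.e.\ a product vector in $W$. Applying this to $W=W_i$ yields a product vector in $W_i$; the one thing still to rule out is that it is proportional to $|\chi\rangle$, which is the unique direction in $W_i$ orthogonal to $|\psi_i\rangle$.

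The remaining step is to show that $|\chi\rangle$ is entangled. Suppose not: after local unitaries we may assume $|\chi\rangle=|00\rangle$, so $|\psi_1\rangle,|\psi_2\rangle,|\psi_3\rangle\in\mathrm{span}\{|01\rangle,|10\rangle,|11\rangle\}$. A direct check with coefficient matrices (a maximally entangled state has coefficient matrix proportional to a unitary, hence with orthogonal columns of equal nonzero norm) shows the only maximally entangled states of this subspace are those of the form $\propto e^{i\theta}|01\rangle+|10\rangle$, which all lie in the two-dimensional space $\mathrm{span}\{|01\rangle,|10\rangle\}$; but three pairwise orthogonal vectors cannot lie in a two-dimensional space, a contradiction. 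Hence $|\chi\rangle$ is entangled, so the product vector found in $W_i$ is not proportional to $|\chi\rangle$ and therefore has nonzero overlap with $|\psi_i\rangle$. Running this for $i=1,2,3$ and invoking Chefles's criterion proves the proposition.

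The step I expect to be the crux is showing that $|\chi\rangle$ is entangled: without it the argument only delivers \emph{some} product vector in each $W_i$, which could happen to be orthogonal to $|\psi_i\rangle$ and hence useless for Chefles's criterion. A more computational alternative, which sidesteps this, is to first bring the configuration to a canonical form: writing $|\psi_k\rangle=(I\otimes U_k)|\Phi^+\rangle$ with $\mathrm{tr}(U_k^\dagger U_l)=0$ and using local unitaries (which preserve LOCC distinguishability and can rotate the two relevant traceless unitaries to $\sigma_z,\sigma_x$), one reduces to the case where the three states are three of the four Bell states; one then exhibits the required product vector for one of them explicitly (for instance, $\tfrac12(|0\rangle-i|1\rangle)(|0\rangle+i|1\rangle)$ is orthogonal to $|\Phi^-\rangle$ and $|\Psi^+\rangle$, so it conclusively identifies $|\Phi^+\rangle$) and uses the local-unitary symmetries permuting the Bell basis to cover the remaining two states.
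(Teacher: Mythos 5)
Your proof is correct and follows essentially the same route as the paper: Chefles's criterion applied to a product vector in the two-dimensional subspace spanned by the target state and the vector $|\chi\rangle$ orthogonal to all three, with the key point being that $|\chi\rangle$ is not a product state. The only difference is that you prove the two ingredients yourself (the determinant argument for a product vector in any two-dimensional subspace, and the entangledness of $|\chi\rangle$) where the paper cites them, and you need only that $|\chi\rangle$ is entangled rather than the cited stronger fact that it is maximally entangled.
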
 

\begin{proof}
Consider an ensemble of three {\it maximally} entangled orthogonal two-qubit states $\ket{\phi_1}$, $\ket{\phi_2}$, and $\ket{\phi_3}$. This implies that the state $\ket{\phi_4}$ which is orthogonal to the states $\ket{\phi_1}$, $\ket{\phi_2}$, and $\ket{\phi_3}$, must also be a maximally entangled state~\cite{Bravyi11}. Now, we choose any state $\ket{\phi_i}$, $i\in\{1,2,3\}$. The states $\ket{\phi_i}$ and $\ket{\phi_4}$ span a subspace which contains at least one product state~\cite{Sanpera98}. This product state has nonzero overlaps with both the states $\ket{\phi_i}$ and $\ket{\phi_4}$, as the two latter states are themselves non-product. Furthermore, the product state must be orthogonal to the states $\ket{\phi_j}$, $j \in \{1, 2, 3\} \setminus \{i\}$. So, we obtain that every $\ket{\phi_i}$ can be unambiguously identified by LOCC with some nonzero probability (recall that we begin with any $\ket{\phi_i}$, $i\in\{1,2,3\}$).
\end{proof}

We next construct a class of two-qubit UEBs which are sets of three orthogonal nonmaximally entangled states. The states in the sets are given by

\begin{equation}\label{eq1}
\begin{array}{l}
\ket{\psi_1} = \sqrt{\lambda_1}\ket{01} + \sqrt{\lambda_2}\ket{10},\\ [0.5 ex]
\ket{\psi_2} = \sqrt{\lambda_3}\ket{00} + \sqrt{\lambda_4}\ket{\psi_1^\perp},\\ [0.5 ex]
\ket{\psi_3} = \sqrt{\lambda_4}\ket{00} - \sqrt{\lambda_3}\ket{\psi_1^\perp},
\end{array}
\end{equation} 
where $\ket{\psi_1^\perp}$ = $\sqrt{\lambda_2}\ket{01} - \sqrt{\lambda_1}\ket{10}$. The $\lambda_i$ are positive numbers and $\lambda_1+\lambda_2$ = 1 = $\lambda_3 + \lambda_4$. The  states, \(|\psi_i\rangle\), are all nonmaximally entangled states, but are not equally entangled. The only state orthogonal to the \(|\psi_i\rangle\) is $\ket{11}$, a product state. So, the above set of states form an UEB, and for varying parameters, form a class of UEBs. 

\begin{proposition}\label{prop2}
Any UEB formed of  the states in (\ref{eq1}) is not unambiguously distinguishable under LOCC.
\end{proposition}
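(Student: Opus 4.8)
The plan is to invoke Chefles's criterion directly and reduce the whole statement to a single state. Recall that a set fails to be conclusively locally distinguishable as soon as one of its members cannot be conclusively identified by LOCC, so it is enough to show this for $\ket{\psi_1}$. By Chefles's criterion, conclusively identifying $\ket{\psi_1}$ requires the existence of a product state $\ket{\alpha}$ with $\langle\psi_1|\alpha\rangle>0$ and $\langle\psi_2|\alpha\rangle=\langle\psi_3|\alpha\rangle=0$. Thus the task is to determine the orthogonal complement of $\mathrm{span}\{\ket{\psi_2},\ket{\psi_3}\}$ and to check whether it contains any product state that is not orthogonal to $\ket{\psi_1}$.

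First I would note that $\ket{\psi_2}$ and $\ket{\psi_3}$ are obtained from $\ket{00}$ and $\ket{\psi_1^\perp}$ by the $2\times2$ transformation with matrix $\bigl(\begin{smallmatrix}\sqrt{\lambda_3}&\sqrt{\lambda_4}\\ \sqrt{\lambda_4}&-\sqrt{\lambda_3}\end{smallmatrix}\bigr)$, whose determinant is $-(\lambda_3+\lambda_4)=-1\neq0$; hence $\mathrm{span}\{\ket{\psi_2},\ket{\psi_3}\}=\mathrm{span}\{\ket{00},\ket{\psi_1^\perp}\}$. Since $\{\ket{00},\ket{\psi_1^\perp},\ket{\psi_1},\ket{11}\}$ is an orthonormal basis of the two-qubit Hilbert space, the orthogonal complement of $\mathrm{span}\{\ket{\psi_2},\ket{\psi_3}\}$ is exactly $\mathrm{span}\{\ket{\psi_1},\ket{11}\}$. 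Therefore any candidate $\ket{\alpha}$ must be of the form $a\ket{\psi_1}+b\ket{11}=a\sqrt{\lambda_1}\ket{01}+a\sqrt{\lambda_2}\ket{10}+b\ket{11}$.

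Next I would impose that $\ket{\alpha}$ be a product state. Arranging its amplitudes into the $2\times2$ coefficient matrix ($c_{00}=0$, $c_{01}=a\sqrt{\lambda_1}$, $c_{10}=a\sqrt{\lambda_2}$, $c_{11}=b$), being a product state means this matrix has rank one, i.e.\ $c_{00}c_{11}-c_{01}c_{10}=0$, which here reads $-a^2\sqrt{\lambda_1\lambda_2}=0$. As $\lambda_1,\lambda_2>0$, this forces $a=0$, so the only product state in $\mathrm{span}\{\ket{\psi_1},\ket{11}\}$ is, up to phase, $\ket{11}$, and it satisfies $\langle\psi_1|\ket{11}\rangle=0$. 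Hence no product state meets both conditions of Chefles's criterion, $\ket{\psi_1}$ cannot be conclusively identified by LOCC, and the UEB is not conclusively locally distinguishable.

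There is essentially no analytic obstacle; the one step needing care is the first one, recognizing that $\ket{\psi_2},\ket{\psi_3}$ span the same plane as $\ket{00},\ket{\psi_1^\perp}$, so that the relevant orthocomplement is the plane spanned by $\ket{\psi_1}$ together with the single leftover product vector $\ket{11}$ — a plane whose unique product ray is precisely the one orthogonal to $\ket{\psi_1}$. It is worth remarking (though not needed for the statement) that the obstruction is localized at $\ket{\psi_1}$: the analogous computation for $\ket{\psi_2}$ or $\ket{\psi_3}$ does produce an admissible product state, so those two states can in fact be conclusively identified by LOCC.
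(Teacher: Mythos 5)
Your proof is correct and follows essentially the same route as the paper's: reduce to conclusive identification of $\ket{\psi_1}$ via Chefles's criterion, note the admissible product state must lie in $\mathrm{span}\{\ket{\psi_1},\ket{11}\}$, and observe that the only product state there is $\ket{11}$, which is orthogonal to $\ket{\psi_1}$. The only difference is that you supply the explicit rank-one (determinant) computation that the paper leaves as ``it is possible to show explicitly,'' which is a welcome addition but not a different argument.
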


\begin{proof}
If an UEB consisting the states in (\ref{eq1}) is unambiguously distinguishable by LOCC, then every state in it will be unambiguously identifiable by LOCC. In particular, $\ket{\psi_1}$ will be unambiguously identified by LOCC, from among the states in the UEB. Then, there will exist a product state which has nonzero overlap with $\ket{\psi_1}$, whereas the same product state must have zero overlap with the other two states $\ket{\psi_2}$ and $\ket{\psi_3}$. Clearly, the product state must belong to the subspace spanned by $\ket{\psi_1}$ and $\ket{11}$. But it is possible to show explicitly that this subspace contains only one product state, viz. $\ket{11}$, which is orthogonal to $\ket{\psi_1}$. This implies that \(|\psi_i\rangle\) cannot be unambiguously locally distinguished when the state sent is promised to be from the set \(\{|\psi_i\rangle\}_{i=1}^3\). This is a contradiction, implying that the original assumption that \(\{|\psi_i\rangle\}_{i=1}^3\) can be unambiguously distinguished by LOCC was not correct. Hence, the proof. 
\end{proof}

Consider now  the following points that we gather from the preceding two propositions.
\begin{itemize}
\item {\it More nonlocality in sets of Proposition~\ref{prop2} than those in Proposition~\ref{prop1}:}  The sets of states considered in the Propositions \ref{prop1} and \ref{prop2} are equally nonlocal when considered with respect to  perfect discrimination by LOCC, as in both cases, they cannot be perfectly distinguished by LOCC~\cite{Walgate02}. But the setting of unambiguous state discrimination under LOCC provides us the privilege to put a hierarchy among the sets. Precisely, we can claim that the sets of Proposition \ref{prop2} are \emph{more nonlocal} compared to those of Proposition \ref{prop1}, because the former are not unambiguously distinguishable under LOCC while the latter are.

\item {\it Less entanglement in sets of Proposition~\ref{prop2} than those in Proposition~\ref{prop1}:} The average entanglement contents of the sets\footnote{The average entanglement content of a set is the mean of the amounts of entanglement contained in the states of the set.} in Proposition \ref{prop2} are strictly less than the average entanglement content of the sets of Proposition \ref{prop1}, as the latter contains  maximally entangled states, while the former does not.
\end{itemize}

These two points therefore provide us with an instance of the phenomenon of more nonlocality with less entanglement. Unlike in Ref.~\cite{Horodecki03}, the instance is in the lowest multiparty physical system. Also unlike in that reference, the phenomenon is obtained using unambiguous local discrimination instead of the perfect variety. More specifically, in Ref.~\cite{Horodecki03}, perfect discrimination of quantum states is considered while in our case, we consider the probabilistic discrimination of quantum states.

{\it Application.---}~As an application of the result obtained, we identify its potential use in the following information processing task. Suppose that there is a trit of classical information that a sister (Enola) wants to send to her two brothers (Mycroft and Sherlock)~\cite{Springer06}, in such a way that the brothers have to come together to find the information: deterministic LOCC  between Mycroft and Sherlock is not enough here, where the LOCC is to act on the bipartite quantum states on which the classical trit is encoded by Enola and sent to the brothers.  Moreover, Enola wishes to encode one part of the information, say the 0 (of the trit of 0, 1, 2), which she judges as more important, in such a way that Mycroft and Sherlock won't be able to find it even by unambiguous LOCC with any nonzero probability. Proposition \ref{prop2} tells us that Enola is able to attain the feat by encoding the trit onto the elements of the UEBs of (\ref{eq1}), with the important part being encoded in the state $\ket{\psi_1}$ of (\ref{eq1}).

One may argue that if  linearly dependent states are used for the above protocol, then also those states cannot be unambiguously distinguished by LOCC. But in that case it is never possible to decode the encoded information perfectly.

{\it UEB necessary.---}~We present here a necessary condition related to the phenomenon of more nonlocality with less entanglement. For the demonstrated instance of the phenomenon, it is necessary that the three entangled states of (\ref{eq1}), form a UEB, because if they do not, then the fourth state $\ket{\psi_4}$, which is orthogonal to the states of (\ref{eq1}), can be an entangled state. Thus, while picking any state $\ket{\psi_i}$, $i\in\{1,2,3\}$, it is possible to find a product state in the span of $\ket{\psi_i}$ and $\ket{\psi_4}$, which is obviously nonorthogonal to $\ket{\psi_i}$ but orthogonal to $\ket{\psi_j}$, where $j \in \{1, 2, 3\} \setminus \{i\}$. Clearly, in this situation the considered set is not able to exhibit more nonlocality than the sets of Proposition~\ref{prop1}, with respect to unambiguous local discrimination.

{\it Best-case of more nonlocality with less entanglement.---}~We further consider sets which contains two entangled states and one product state, given by

\begin{equation}\label{eq2}
\begin{array}{l}
\ket{\Psi_1} = \ket{00},\\ [0.5 ex]
\ket{\Psi_2} = \sqrt{\lambda_1}\ket{01} + \sqrt{\lambda_2}\ket{10},\\ [0.5 ex]
\ket{\Psi_3} = \sqrt{\lambda_2}\ket{01} - \sqrt{\lambda_1}\ket{10},
\end{array}
\end{equation} 
where all $\lambda_i$ are positive numbers and $\lambda_1+\lambda_2$ = 1. A version of the above sets with nonorthogonal states can be found in Ref.~\cite{Bandyopadhyay09}. The  set, $\{\ket{\Psi_i}\}_i$, is not unambiguously distinguishable by LOCC. Moreover, both the states $\ket{\Psi_2}$ and $\ket{\Psi_3}$ are not unambiguously identifiable by LOCC. This follows by using the same line of proof as given for Proposition \ref{prop2}. Take a state $\ket{\Psi_i}$, $i\in\{2,3\}$. There is only one product state in the span of $\ket{\Psi_i}$ and $\ket{11}$, and the state is $\ket{11}$, which is orthogonal to $\ket{\Psi_i}$. Therefore, it is not possible to identify the state $\ket{\Psi_i}$ unambiguously under LOCC. 

Notice that the sets corresponding to Proposition \ref{prop1} contain no state which cannot be unambiguously identified by LOCC. The sets corresponding to Proposition \ref{prop2} contain only one state which cannot be unambiguously identified by LOCC. Both these types of sets (corresponding to Propositions \ref{prop1} and \ref{prop2}) contain only entangled states, but the set \(\{|\Psi_i\rangle\}_i\) (\(i=1,2,3\)) contain two entangled states and one product state. So, the number of entangled states has got reduced, and yet the set contains two states which cannot be unambiguously identified by LOCC. Furthermore, there is no set of three orthogonal two-qubit pure states which have all three states, not unambiguously identifiable by LOCC~\cite{Bandyopadhyay09}. In this way, we argue that the above sets $\{\ket{\Psi_i}\}_i$ exhibit the ``best-case'' of the phenomenon of more nonlocality with less entanglement, when the cardinality of the set of pure two-qubit orthogonal states is three. We present this statement in the following proposition.

\begin{proposition}\label{prop3}
The states of the set $\{\ket{\Psi_i}\}_i$ exhibit the best-case of more nonlocality with less entanglement sets of for two-qubit pure states of cardinality three.
\end{proposition}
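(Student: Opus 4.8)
The statement is a synthesis of facts most of which are already in hand: the paragraph preceding the proposition records that within $\{\ket{\Psi_i}\}_{i=1}^{3}$ both $\ket{\Psi_2}$ and $\ket{\Psi_3}$ fail to be conclusively identifiable by LOCC, that the set has one product state and two (non-maximally) entangled states — one entangled state fewer than the sets of Proposition~\ref{prop1} (three maximally entangled states) and Proposition~\ref{prop2} (three entangled states), and of strictly smaller average entanglement than the maximally entangled sets of Proposition~\ref{prop1}. The plan is therefore to pin down the one ingredient that makes ``best case'' meaningful, namely an optimality bound: in \emph{any} set of three pairwise orthogonal two-qubit pure states, \emph{at most two} of the three members can fail to be conclusively identifiable by LOCC. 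Granting this bound, $\{\ket{\Psi_i}\}_i$ saturates it ($\ket{\Psi_1}=\ket{00}$ being the identifiable member, via the product state $\ket{00}$ in Chefles's criterion) while using one fewer entangled state than the sets of Propositions~\ref{prop1} and~\ref{prop2} — which is exactly the sense of the claim; one could also simply invoke the corresponding assertion of Ref.~\cite{Bandyopadhyay09}.

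To prove the optimality bound I would argue by contradiction: suppose $\ket{\chi_1},\ket{\chi_2},\ket{\chi_3}$ are pairwise orthogonal two-qubit pure states none of which is conclusively identifiable by LOCC from the set, and let $\ket{\chi_4}$ be the state (unique up to phase) orthogonal to all three. By Chefles's criterion, for each $i$ every product state orthogonal to the two states $\ket{\chi_j}$, $j\neq i$, must also be orthogonal to $\ket{\chi_i}$; since such product states lie in the two-dimensional space $\mathrm{span}\{\ket{\chi_i},\ket{\chi_4}\}$, whose only vectors orthogonal to $\ket{\chi_i}$ are the multiples of $\ket{\chi_4}$, this says: every product vector contained in $\mathrm{span}\{\ket{\chi_i},\ket{\chi_4}\}$ is proportional to $\ket{\chi_4}$. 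Because every two-dimensional subspace of $\mathbb{C}^2\otimes\mathbb{C}^2$ contains at least one product vector~\cite{Sanpera98}, $\ket{\chi_4}$ must then be a product state; and no $\ket{\chi_i}$ can be product, for otherwise $\mathrm{span}\{\ket{\chi_i},\ket{\chi_4}\}$ would contain the two distinct product states $\ket{\chi_i}$ and $\ket{\chi_4}$, making $\ket{\chi_i}$ a product vector nonorthogonal to itself. Hence $\ket{\chi_1},\ket{\chi_2},\ket{\chi_3}$ are all entangled.

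Next I would apply a local unitary so that $\ket{\chi_4}=\ket{11}$ — this preserves product-ness, entanglement and all overlaps, hence the whole hypothesis — making $\ket{\chi_1},\ket{\chi_2},\ket{\chi_3}$ a basis of $\mathrm{span}\{\ket{00},\ket{01},\ket{10}\}$. Writing $\ket{\chi_i}=a_i\ket{00}+b_i\ket{01}+c_i\ket{10}$, entanglement of $\ket{\chi_i}$ amounts to $b_ic_i\neq0$, and the product-state condition on $\mu\ket{\chi_i}+\nu\ket{11}$ reduces to $\mu\,(a_i\nu-\mu b_ic_i)=0$; besides $\ket{11}$ this yields a further product vector precisely when $a_i\neq0$, namely $\ket{\chi_i}+(b_ic_i/a_i)\ket{11}$, which has nonzero $\ket{00}$-component and is therefore nonorthogonal to $\ket{\chi_i}$. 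Since $\ket{11}$ has to be the only product vector in that span, we are forced to $a_i=0$ for each $i$, i.e.\ all three $\ket{\chi_i}$ lie in $\mathrm{span}\{\ket{01},\ket{10}\}$ — impossible for three linearly independent vectors. This contradiction establishes the bound, and with the facts already recorded, the proposition. There is no genuine obstacle here; the only point demanding care is the bookkeeping of how many product vectors a two-dimensional subspace of $\mathbb{C}^2\otimes\mathbb{C}^2$ can hold, which is what forces $a_i=0$.
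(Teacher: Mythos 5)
Your proposal is correct, and its overall synthesis (two non-identifiable states in $\{\ket{\Psi_i}\}_i$, one fewer entangled state than the sets of Propositions~\ref{prop1} and~\ref{prop2}, lower average entanglement, plus a saturated upper bound of two non-identifiable states) is exactly the reading the paper intends: the paper's ``proof'' of Proposition~\ref{prop3} is just the preceding paragraph, and for the crucial optimality statement --- that no set of three orthogonal two-qubit pure states can have all three members non-identifiable by conclusive LOCC --- the paper simply cites Ref.~\cite{Bandyopadhyay09}. Where you genuinely depart from the paper is in proving that bound from scratch, and your argument is sound: Chefles's criterion localizes the candidate product states for $\ket{\chi_i}$ to $\mathrm{span}\{\ket{\chi_i},\ket{\chi_4}\}$; the Sanpera-et-al.\ fact that every two-dimensional subspace of $\mathbb{C}^2\otimes\mathbb{C}^2$ contains a product vector forces $\ket{\chi_4}$ to be product and each $\ket{\chi_i}$ entangled; and the determinant computation after normalizing $\ket{\chi_4}=\ket{11}$ forces $a_i=0$ for all $i$, an impossibility in a two-dimensional span. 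This is the same machinery the paper uses for Propositions~\ref{prop1} and~\ref{prop2}, so your route makes the whole hierarchy self-contained rather than resting on an external reference --- a real gain in completeness, at the cost of some length the paper avoids. One small point of rigor: the nonorthogonality of the product vector $\ket{\chi_i}+(b_ic_i/a_i)\ket{11}$ to $\ket{\chi_i}$ is not really ``because it has nonzero $\ket{00}$-component''; it is immediate because its inner product with $\ket{\chi_i}$ equals $\braket{\chi_i|\chi_i}=1$, since the added $\ket{11}$ term is orthogonal to $\ket{\chi_i}$. With that wording fixed, the argument stands.
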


Note that the set of states of (\ref{eq2}) is equally nonlocal with respect to the sets corresponding to the Propositions \ref{prop1} and \ref{prop2} when the perfect local discrimination of the states is considered.  But the setting of unambiguous discrimination under LOCC, again provides us the privilege of proving that the set of states of (\ref{eq2}) is more nonlocal in comparison to the sets corresponding to the Propositions \ref{prop1} and \ref{prop2}.

{\it UEB  necessary again.---}~Like a previous necessary condition, we now provide another necessary condition related now to the best-case of more nonlocality with less entanglement for sets of two-qubit pure states with cardinality three. It is necessary that the span of the states of (\ref{eq2}), is also spanned by a two-qubit UEB. Here it is important to mention that there is only one cardinality possible for two-qubit UEBs, and that is three~\cite{Halder21}. The rest of the proof follows the same line of argument as given in case of the previous necessary condition. If the span of the states of (\ref{eq2}) is not spanned by a two-qubit UEB, then the fourth state $\ket{\Psi_4}$, which is orthogonal to the states of (\ref{eq2}), can be an entangled state. Thus, while picking any state $\ket{\Psi_i}$, $i\in\{1,2,3\}$, it is possible to find a product state in the span of $\ket{\Psi_i}$ and $\ket{\Psi_4}$, which is obviously nonorthogonal to $\ket{\Psi_i}$ but orthogonal to $\ket{\Psi_j}$, where $j \in \{1, 2, 3\} \setminus \{i\}$. Clearly, in this situation, the considered set is not able to exhibit the ``best-case'' of more nonlocality with less entanglement. One such UEB that spans the subspace spanned by the states of (\ref{eq2}) is given by the set of states in (\ref{eq1}). Interestingly, the states of (\ref{eq1}) and those of (\ref{eq2}) span the same subspace, and yet the latter states exhibit more nonlocality, compared to the former ones.

For both the cases discussed in this paper, one may attempt to quantify nonlocality. This can be done, e.g.,  by considering the optimal average probability of success to identify the states unambiguously. However, it is a difficult quantity to actually evaluate, when the operational paradigm is LOCC, because of optimizations required over LOCC-based measurement strategies. Quantification of ``nonlocality'' in the sense of difference between global and local distinguishing abilities for arbitrary ensembles has been attempted in the literature (see e.g.~\cite{Horodecki07}), but is as yet not fully understood and the existing quantifications are difficult to evaluate.

\section{Conclusion}\label{sec4}
We have established  connections between the concept of unextendible entangled bases and the phenomenon of more nonlocality with less entanglement, for the elementary bipartite quantum system, i.e., a quantum system  associated with the two-qubit Hilbert space when only single copy of each input state is available. In fact, we have presented here a general version of the  phenomenon by using the setting of unambiguous quantum state discrimination under LOCC. In particular, we have shown that the degree of nonlocality  increases for both of the following situations: (i) reduction in the average entanglement content of the considered set, and  (ii) reduction in the number of entangled states within the set. Here, ``nonlocality'' is being used in the sense of a difference between global and local abilities to discriminate between shared quantum states of a set. And, by ``general version of the phenomenon'', we also mean that we have considered here the probabilistic version of the local quantum state discrimination problem within the phenomenon of more nonlocality with less entanglement, while in previous works, only the deterministic variety was considered.

The setting of unambiguous state discrimination under LOCC is found to be very useful in examining the degree of nonlocality of several sets which cannot be perfectly distinguished by LOCC. For the two-qubit case, we have discussed all possible combinations of states which may occur from the perspective of unambiguous local state discrimination.

\begin{acknowledgments}
We acknowledge support from the Department of Science and Technology, Government of India through the QuEST  grant (Grant No. DST/ICPS/QUST/Theme-3/2019/120).
\end{acknowledgments}

\bibliography{ref}
\end{document}